\newtheorem{Theorem}{Theorem}[section]
\newtheorem{Definition}{Definition}[section]
\newtheorem{Lemma}{Lemma}[section]
\newtheorem{Proposition}{Proposition}[section]
\title{\textbf On the solution of the harmonic-divgrad PDE system}
\author[a,b]{Federico Manzoni}
\affiliation[a]{Mathematics and Physics department, Roma Tre, Via della Vasca Navale 84, Rome, Italy}
\affiliation[b]{INFN Roma Tre Section, Physics department, Via della Vasca Navale 84, Rome, Italy}
\emailAdd{federico.manzoni@uniroma3.it, ORCID ID: 0000-0002-9979-6154}
\abstract{We study a particular system of partial differential equations in which the Laplace--Beltrami, the  divergence and the gradient operators of the unknown functions appear (harmonic-divgrad system). The analysis of this particular system is motivated by its occurrence in the study of asymptotic symmetries in $p$-form gauge theories and in mixed symmetry tensor gauge theories. Using the Killing--Hopf theorem and leveraging the properties of Riemannian manifolds with constant sectional curvature we establish the conditions under which these equations admit only the trivial solutions proving their trivialization on positive curvature space forms.}
\begin{document}

\maketitle

\section{Introduction}

Mathematical modeling of physical systems often involves dealing with differential equations and partial differential equations. Motivated by classical field theory, specifically by gauge theories where the gauge field is a $p$-form or a mixed symmetry tensor \cite{Henneaux:1986ht,Curtright:1980un,Curtright:1980yj,Curtright:1980yk} we study a particular system of partial differential equations which appears in such theories. Additionally, motivated also by the study of asymptotic symmetries in gauge theories and/or gravity (a non-exhaustive list of classic, recent and reviewing, articles on the subject is \cite{Sachs:1962wk,Bondi:1962px,Anderson1992IntroductionTT,Barnich_2010,strominger2018lectures,Afshar:2018apx,Manzoni:2021dij,Ciambelli:2022vot,Manzoni:2025gxw, Manzoni:2024agc,Romoli:2024hlc,Ferreira:2016hee,Flanagan_2020,Compere:2018aar,barnich:2009se,Romoli:2025map,Manzoni:2024tow,manzf,Manzoni:2025zmi,Manzoni:2026skv}) we are mainly interested in considering the $(D-2)$-dimensional sphere, known as the celestial sphere, as manifold. By extension, we consider functions defined on a maximally symmetric space, mathematically referred to as space form. Physically, these spaces admit the maximal number of Killing vectors while mathematically they are complete Riemannian manifolds with constant sectional curvature.\\
A powerful theorem of Riemannian geometry, the Killing--Hopf theorem \cite{Killing1891,Hopf1926ZumCR,lee2019introduction}, ensures that every space form admits a Riemannian covering given by one of the sphere, the hyperbolic space or the Euclidean space depending on the sign and value of the sectional curvature. Thanks to this theorem we are able to prove that the particular system of second-order partial differential equations, given in Definition \ref{def}, on a positive curvature space form $M$ for a function $f \in C^{\infty}(M)$ and a 1-form $h \in \Omega^1(M)$ admits only the trivial solution away from an explicit excluded resonant set for the parameters ($k_1$,$k_2$), as defined in Lemma \ref{Lem1} and determined by the Laplacian spectrum. The proof strategy consists in reducing the second-order coupled PDEs system to a single scalar fourth-order operator $L$ acting on $f-\nabla^ih_i$ (Proposition \ref{Prop1}), exclude the resonant set for the parameters ($k_1$,$k_2$) in order to ensure the kernel triviality for $L$ (Lemma \ref{Lem1}), prove the trivialization on $S^d$ using energetic arguments (Theorem \ref{trivsys}) and finally use Killing--Hopf theorem to conclude the trivialization on a positive sectional curvature space form (Theorem \ref{trivsysext}).

\section{Space form and Killing–Hopf theorem}
Let us start with the definition of space form.
\begin{Definition}[Space form]
    A space form is a complete Riemannian manifold $(M,\boldsymbol{g})$ of constant sectional curvature $K$.
\end{Definition}
Three fundamental examples are the Euclidean $d$-dimensional space $(E^d,\boldsymbol{\delta})$, the $d$-dimensional sphere $(S^d,\boldsymbol{\gamma}_S)$, and the $d$-dimensional hyperbolic space $(H^d,\boldsymbol{\gamma}_H)$, although a space form does not need to be simply connected. The fundamental theorem we are interested in is the following one. 
\begin{Theorem}[Killing--Hopf]\label{tkh}
    Let $(M,\boldsymbol{g})$ be a complete $d$-dimensional Riemannian manifold of constant sectional curvature $K$, then the Riemannian universal cover $(\Tilde{M},\tilde{\boldsymbol{g}})$ of $(M,\boldsymbol{g})$ is 
    \begin{itemize}
        \item the $d$-dimensional sphere if $K>0$;
        \item the Euclidean $d$-dimensional space if $K=0$;
        \item the $d$-dimensional hyperbolic space if $K<0$.
    \end{itemize}
\end{Theorem}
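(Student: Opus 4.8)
The plan is to reduce to the simply connected case and then invoke Cartan's theorem comparing the given constant-curvature manifold with the relevant model space. First I would pass to the Riemannian universal cover $(\tilde{M},\tilde{\boldsymbol{g}})$, which is automatically complete (geodesics lift through the covering), simply connected, and --- since the covering projection is a local isometry --- still of constant sectional curvature $K$. It therefore suffices to produce an isometry between $(\tilde{M},\tilde{\boldsymbol{g}})$ and the model space $M_K$, namely the round sphere for $K>0$, Euclidean space for $K=0$, and hyperbolic space for $K<0$, each of which is complete and simply connected of curvature $K$.

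The key local input is that constant sectional curvature pins down the entire Riemann tensor: in any orthonormal frame one has $R(X,Y)Z = K\big(\boldsymbol{g}(Y,Z)X - \boldsymbol{g}(X,Z)Y\big)$, the identical algebraic expression on both $\tilde{M}$ and $M_K$. Fixing basepoints $p\in\tilde{M}$ and $\bar{p}\in M_K$ together with any linear isometry $I:T_p\tilde{M}\to T_{\bar p}M_K$, I would then solve the Jacobi equation in geodesic normal coordinates on both sides. Because the curvature operators are intertwined by $I$ and stay constant along every geodesic, the Jacobi fields, and hence the metric components in normal coordinates, coincide; this produces a local isometry $f = \exp_{\bar p}\circ\, I \circ \exp_p^{-1}$ with $df_p = I$, which is precisely the content of Cartan's local comparison theorem.

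To globalize, I would continue $f$ along every geodesic issuing from $p$. Completeness of both manifolds guarantees that these geodesics extend for all time, so the continuation is defined on all of $\tilde{M}$, while simple connectivity of $\tilde{M}$ guarantees, through a monodromy and homotopy-lifting argument, that the continuation is single-valued and independent of the chosen path. The resulting global map is a local isometry, and a standard covering-space argument shows it is a Riemannian covering of $M_K$; since $M_K$ is simply connected this cover has a single sheet, i.e. $f$ is a global isometry, yielding the claimed identification of $\tilde{M}$ with the appropriate model space.

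The main obstacle is the positive-curvature case. For $K\le 0$ the Cartan--Hadamard theorem makes $\exp_p$ a diffeomorphism of $T_p\tilde{M}$ onto $\tilde{M}$, so $f$ is immediately globally defined and bijective and the argument is essentially formal. For $K>0$, by contrast, geodesics from $p$ refocus at conjugate points (the antipode on the model sphere), so $\exp_p$ is neither injective nor everywhere a local diffeomorphism; the continuation must therefore be carried out with care across the conjugate locus, and one must verify that the patched map stays well defined and surjective. This is exactly where completeness and simple connectivity do the essential work, the Cartan--Ambrose--Hicks continuation replacing the naive use of a single global exponential chart.
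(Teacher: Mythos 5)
Your outline is correct: it is the standard Cartan--Ambrose--Hicks argument (local isometry from constant curvature via the Jacobi equation, then globalization by continuation using completeness and simple connectivity), which is precisely the proof given in the reference the paper cites. The paper itself offers no independent proof of this theorem --- it simply defers to Lee and to the original papers of Killing and Hopf --- so your proposal matches the intended argument and correctly flags the only delicate point, namely the conjugate locus in the $K>0$ case.
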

\begin{proof}
    The reader interested in the proof can find it, for example, in \cite{lee2019introduction}. The original references are \cite{Killing1891,Hopf1926ZumCR}.
\end{proof}
Diagrammatically, the Killing-Hopf theorem can be viewed as 
\begin{equation}  
\tikzset{every picture/.style={line width=0.75pt}}
\begin{tikzpicture}[x=0.75pt,y=0.75pt,yscale=-1,xscale=1]

\draw    (189,101) -- (189,191.5) ;
\draw [shift={(189,193.5)}, rotate = 270] [color={rgb, 255:red, 0; green, 0; blue, 0 }  ][line width=0.75]    (10.93,-3.29) .. controls (6.95,-1.4) and (3.31,-0.3) .. (0,0) .. controls (3.31,0.3) and (6.95,1.4) .. (10.93,3.29)   ;

\draw (180,73.4) node [anchor=north west][inner sep=0.75pt]    {$\tilde{M}$};
\draw (180,198.4) node [anchor=north west][inner sep=0.75pt]    {$M$};
\draw (164,140.4) node [anchor=north west][inner sep=0.75pt]    {$\pi $};
\end{tikzpicture}
\end{equation}
where the pullback of the covering map is such that 
\begin{equation}
    \tilde{\boldsymbol{g}}=\pi^*\boldsymbol{g}.
\end{equation}
Therefore, according to Killing--Hopf theorem, any complete Riemannian manifold of constant sectional curvature is the quotient of one of the three canonical examples above by a group, which is a subgroup of the corresponding isometry group, that acts freely and properly discontinuously. As a result, only very few smooth manifolds can admit a constant sectional curvature metric; despite this, these spaces are among the most widely used Riemannian manifolds in theoretical physics. A first example are the constant time sections of standard cosmological models \cite{Calcagni:2017sdq}. Such Riemannian manifolds appear also in string theory and holography; indeed, a particular class of Sasaki-Einstein manifolds used to extend the standard AdS/CFT correspondence, the lens space $L(p,q_1,..,q_n)$, are nothing but that positive constant sectional curvature manifold \cite{Ionicioiu:1998qg,Ogawa:2022fyc,Maldacena_1999,Manzoni:2022htx,Aharony:2003sx,Amariti:2022dui,Antinucci:2021edv}. They are quotient of $S^{2n-1}$ by $\Gamma$ that
is a finite cyclic group acting on the sphere by isometries. 

\section{A result on the harmonic-divgrad PDEs system}
Let  us start with the definition of the harmonic-divgrad PDEs system.
\begin{Definition}[Harmonic-divgrad system]\label{def}
Let $(M,\boldsymbol{g})$ be a smooth $d$-dimensional Riemannian manifold with sectional curvature $K$, endowed with its Levi-Civita connection $\nabla$. Let
\begin{equation}
f \in C^\infty(M) \, , \qquad h \in \Omega^1(M) \, .
\end{equation}
Writing locally
\begin{equation}
h = h_i \, dx^i \, ,
\end{equation}
we define the harmonic-divgrad system as
\begin{subequations}
\begin{alignat}{2}
    &[-k_1+\Delta]f-2\nabla^ih_i=0,
        \label{C1a}\\
    &[-k_2+\Delta]h_i+2\nabla_if=0.
    \label{C1b}
\end{alignat}
\end{subequations}
Equivalently, in intrinsic form, the system can be written as
\begin{subequations}
\begin{alignat}{2}
(-k_1 + \Delta) f + 2 \, \delta h &= 0 \, , \\
(-k_2 + \Delta) h + 2 \, df &= 0 \, ,
\end{alignat}
\end{subequations}
where $\Delta = \nabla^i \nabla_i$ denotes the Laplace-Beltrami operator, $\delta$ is the codifferential with the standard notation $\delta h=-\nabla^ih_i$ and $d$ the differential such that $df=\nabla_if$. Throughout the paper, indices are raised and lowered with the metric $\boldsymbol{g}$ and the Einstein summation convention is understood.
\end{Definition}
This kind of system appears in some physical contexts such as exotic gauge theories: its employment is unavoidable in the study of gauge-for-gauge chains in $p$-forms gauge theories especially in cases of high dimensionality of the null-infinity sphere $S^{D-2}$ where $D$ is the dimension of the Minkowski space-time where the gauge theory is defined. Possible extensions of this PDEs system with more generic unknown could be essential for the explicit computations of the asymptotic symmetries in mixed symmetry tensor gauge theories \cite{manzf,Manzoni:2024tow,Manzoni:2025zmi}.
\\
Our main goal is to show the trivialization of the solution of the harmonic-divgrad PDEs system on a positive sectional
curvature space form. The first step is to show that both $f$ and $\nabla^ih_i$ satisfy the same differential equation on the sphere so that their difference also satisfies it.
\begin{Proposition}[Reduction to a single scalar fourth-order equation]\label{Prop1}
    Let 
    \begin{equation}
f \in C^\infty(S^d) \, , \qquad h \in \Omega^1(S^d) \, ,
\end{equation}
satisfying the harmonic-divgrad system on the sphere $S^d$. Let \begin{equation}
    L:=\Delta^2+(\Bar{k}_2-k_1+4)\Delta-k_1\Bar{k}_2
\end{equation} with $\bar{k}_2:=\left(K(d-1)-k_2\right)$ and $(k_1,k_2) \in \mathbb{R}^+ \times \mathbb{R}^+$, then $f-\nabla^ih_i \in C^{\infty}(S^d)$ satisfies
\begin{equation}
    L(f-\nabla^ih_i)=0.
\end{equation}
\end{Proposition}
\begin{proof}
We start by applying $\nabla^i$ to equation \eqref{C1b}:
\begin{equation}
    2\Delta f-k_2\nabla^ih_i+\nabla^i\nabla^j\nabla_j h_i=0,
    \label{C2}
\end{equation}
where, due to curvature effects, we cannot exchange covariant derivatives; however, we can make use of the Riemann tensor components identity $[\nabla^i,\nabla^j]h_i=\nabla^i\nabla^jh_i-\nabla^j\nabla^ih_i=R^k_{iij}h_k=g^{ks}R_{siij}h_k$. Indeed
\begin{equation}
\begin{aligned}
\nabla^i\nabla^j(\nabla_jh_i)&=\nabla^j\nabla^i(\nabla_jh_i)+R^j_{kij}(\nabla^kh^i)+R^i_{kij}(\nabla^jh^k)=\\
&=R^j_{kij}(\nabla^kh^i)+R^i_{kij}(\nabla^jh^k)+\nabla^j\nabla_j\nabla_ih^i+\nabla^j(R^i_{kij}h^k)=\\
&=\nabla^j\nabla_j\nabla_ih^i-R_{ki}(\nabla^kh^i)+R_{kj}(\nabla^jh^k)+\nabla^j(R_{kj}h^k)=\\
&=\nabla^j\nabla_j\nabla_ih^i+\nabla^j(R_{kj}h^k)
\end{aligned}    
\end{equation}
but the Riemann tensor components for a maximally symmetric space with sectional curvature $K>0$ are given by
\begin{equation}
    R_{\alpha \beta \gamma \delta}=K\left(g_{\alpha \gamma}g_{\beta \delta}-g_{\alpha \delta}g_{\beta \gamma}\right),
\end{equation}
from which, using the metric of the $d$-dimensional sphere $\gamma_{\alpha \beta}$, we get
\begin{equation}
    R_{\beta \delta}=\gamma^{\alpha \gamma} R_{\alpha \beta \gamma \delta}=Kd\gamma_{\beta \delta}-K\delta^{\gamma}_{\delta}\gamma_{\beta \gamma}=K(d-1)\gamma_{\beta \delta}.
\end{equation}
Therefore 
\begin{equation}
\nabla^i\nabla^j(\nabla_jh_i)=\nabla^j\nabla_j\nabla^ih_i+K(d-1)\nabla^ih_i.
\end{equation}
Hence, returning to equation \eqref{C2}, we can write
\begin{equation}
    2\Delta f+\left[\Delta+\left(K(d-1)-k_2\right)\right]\nabla^ih_i=0,
\end{equation}
from which
\begin{equation}
    \Delta f=-\frac{1}{2}[\Delta+\Bar{k}_2]\nabla^ih_i,
    \label{soldf}
\end{equation}
where we defined $\Bar{k}_2:=\left(K(d-1)-k_2\right)$.
Now, inserting the expression of $\Delta f$ from above in equation \eqref{C1a} we get
\begin{equation}
    -\frac{1}{2}[\Delta+\Bar{k}_2]\nabla^ih_i=2\nabla^ih_i+k_1f,
\label{eqf}    
\end{equation}
from which 
\begin{equation}
    f=-\frac{[\Delta+\Bar{k}_2+4]}{2k_1}\nabla^ih_i,
\label{solf}    
\end{equation}
whose reinsertion into the original equation \eqref{C1a} gives 
\begin{subequations}
\begin{alignat}{1}
&-[-k_1+\Delta]\frac{[\Delta+\Bar{k}_2+4]}{2k_1}\nabla^ih_i-2\nabla^ih_i=0
\end{alignat}
\end{subequations}
and, rearranging, we can write 
\begin{equation}
 [\Delta^2+(\Bar{k}_2-k_1+4)\Delta-k_1\Bar{k}_2]\nabla^ih_i=0.
\end{equation}
In similar way, from equation \eqref{C1a}, we can find
\begin{equation}
    \nabla^ih_i=\frac{1}{2}[-k_1+\Delta]f
\end{equation}
and, substituting it in equation \eqref{C2}, we get
\begin{equation}
    2\Delta f=-\frac{1}{2}[\Delta+\Bar{k}_2][-k_1+\Delta]f \quad \Rightarrow \quad [\Delta^2+(\Bar{k}_2-k_1+4)\Delta-k_1\Bar{k}_2]f=0.
\end{equation}
Therefore, both functions $\nabla^ih_i$ and $f$ satisfy the same differential equation $Lf=L\nabla^ih_i=0$; therefore also the difference $f-\nabla^ih_i$ satisfy the same differential equation $L(f-\nabla^ih_i)=0$.    
\end{proof}
Hence, both $f$ and $h$ satisfy a fourth order differential equation; the next step now is to exclude the values of $(k_1,k_2)$ which make the kernel of the differential operator $L$ non-trivial.
\begin{Lemma}[Kernel trivialization of $L$]\label{Lem1}
    Let $L:=\Delta^2+(\Bar{k}_2-k_1+4)\Delta-k_1\Bar{k}_2$ be a differential operator on $S^d$ and let $(k_1,k_2) \in (\mathbb{R}^+ \times \mathbb{R}^+)\setminus{\mathbb{J}}$,
where
\begin{equation}
    \mathbb{J}:=\bigg\{(k_1,k_2) \in (\mathbb{R}^+ \times \mathbb{R}^+)\  | \ \bigg(k_1>0,k_2=K(d-1)-K\mu_{\ell}+\frac{4K\mu_{\ell}}{k_1+K\mu_{\ell}}\bigg)\bigg\} 
\end{equation}
and $\mu_{\ell}:=\ell(\ell+d-1)$ with $\ell\geq0$. Then $\mathrm{Ker}(L)=\{0\}.$
\end{Lemma}
\begin{proof}
    Given
\begin{equation}
   L:= \Delta^2+(\Bar{k}_2-k_1+4)\Delta-k_1\Bar{k}_2
\end{equation}
let us study its kernel $\mathrm{Ker}(L)$. If $\lambda$ is an eigenvalue of the laplacian on the sphere, then an eigenfunction $Y$ such that
\begin{equation}
    \Delta Y = \lambda Y
\end{equation}
satisfies
\begin{equation}
    L Y = p(\lambda) Y, \qquad   p(\lambda) := \lambda^2 + (\bar{k}_2 - k_1 + 4)\lambda - k_1 \bar{k}_2.
\end{equation}
Therefore, the kernel is non-trivial if and only if there exists an eigenvalue $\lambda$ on the sphere such that $p(\lambda) = 0$, i.e
\begin{equation}
    \lambda^2 + (K(d-1) - k_2 - k_1 + 4)\lambda - k_1 (K(d-1) - k_2) = 0.
\end{equation}
Solving for $k_2$, we obtain
\begin{equation}
    k_2 = \frac{k_1 K(d-1) - \lambda^2 - (K(d-1) - k_1 + 4)\lambda}{k_1 - \lambda}=K(d-1)+\lambda-\frac{4\lambda}{k_1-\lambda}, \qquad k_1 \neq \lambda
\end{equation}
Thus, for each eigenvalue $\lambda$ on the sphere, this formula provides the curve of excluded values for $k_2$ as a function of $k_1$. For a $d$-dimensional sphere $S^d$ with sectional curvature $K>0$ we have $\lambda=-K\mu_{\ell}:=-K\ell(\ell+d-1),$
hence $\mathrm{Ker}(L)=0 \Leftrightarrow (k_1,k_2) \in (\mathbb{R}^+ \times \mathbb{R}^+)\setminus{\mathbb{J}}$.
\end{proof}
\begin{figure}[h]
    \centering
    {\includegraphics[width=.48\textwidth]{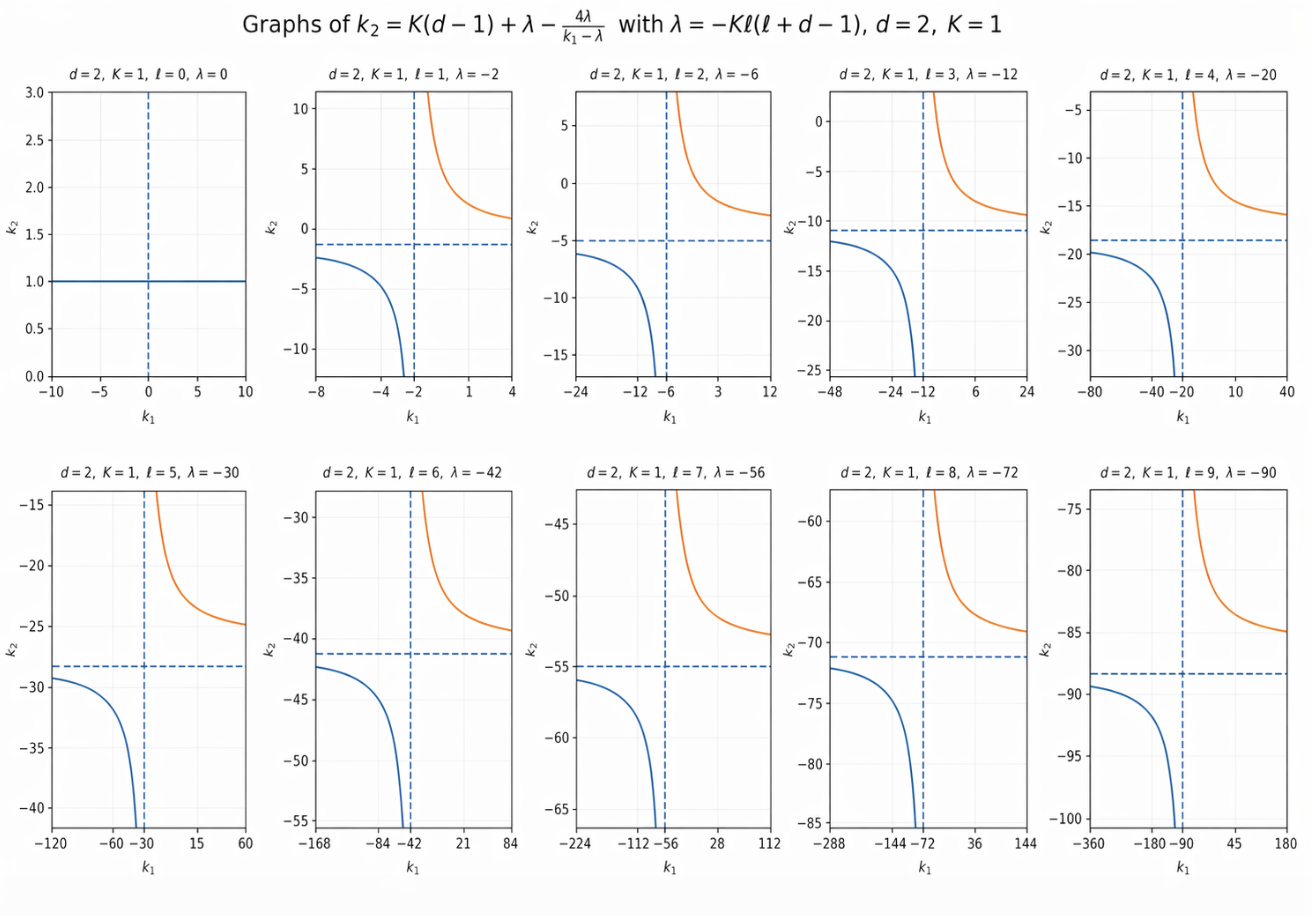}}
    \label{fig18a}\quad
    {\includegraphics[width=.48\textwidth]{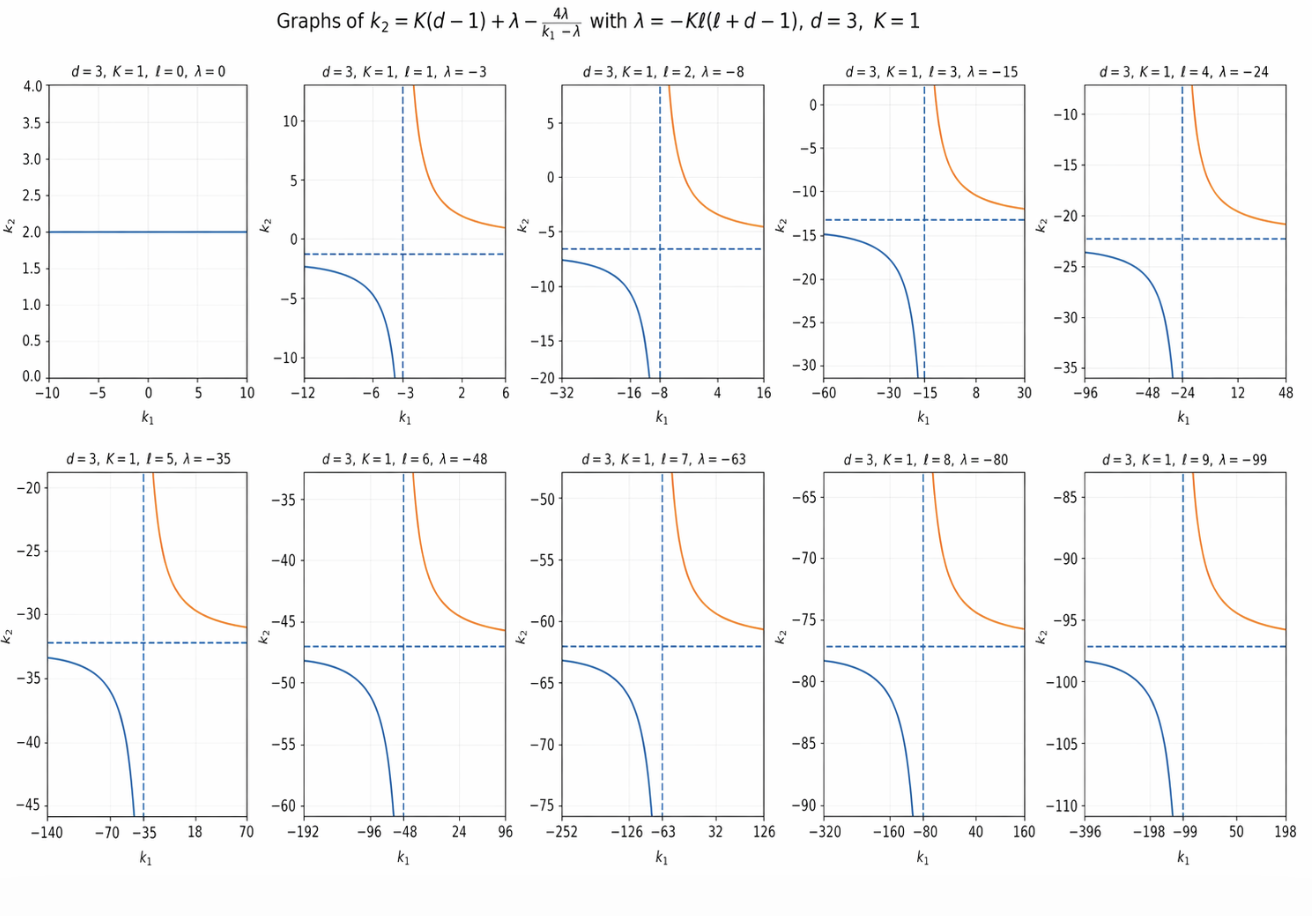}
    \label{fig18b}}
\caption{\textit{Graphs of curves that make the kernel of the $L$ operator non-trivial for $d=2,3$ and $K=1$.}}
\label{graph1}
\end{figure}
\begin{figure}[h]
    \centering
    {\includegraphics[width=.48\textwidth]{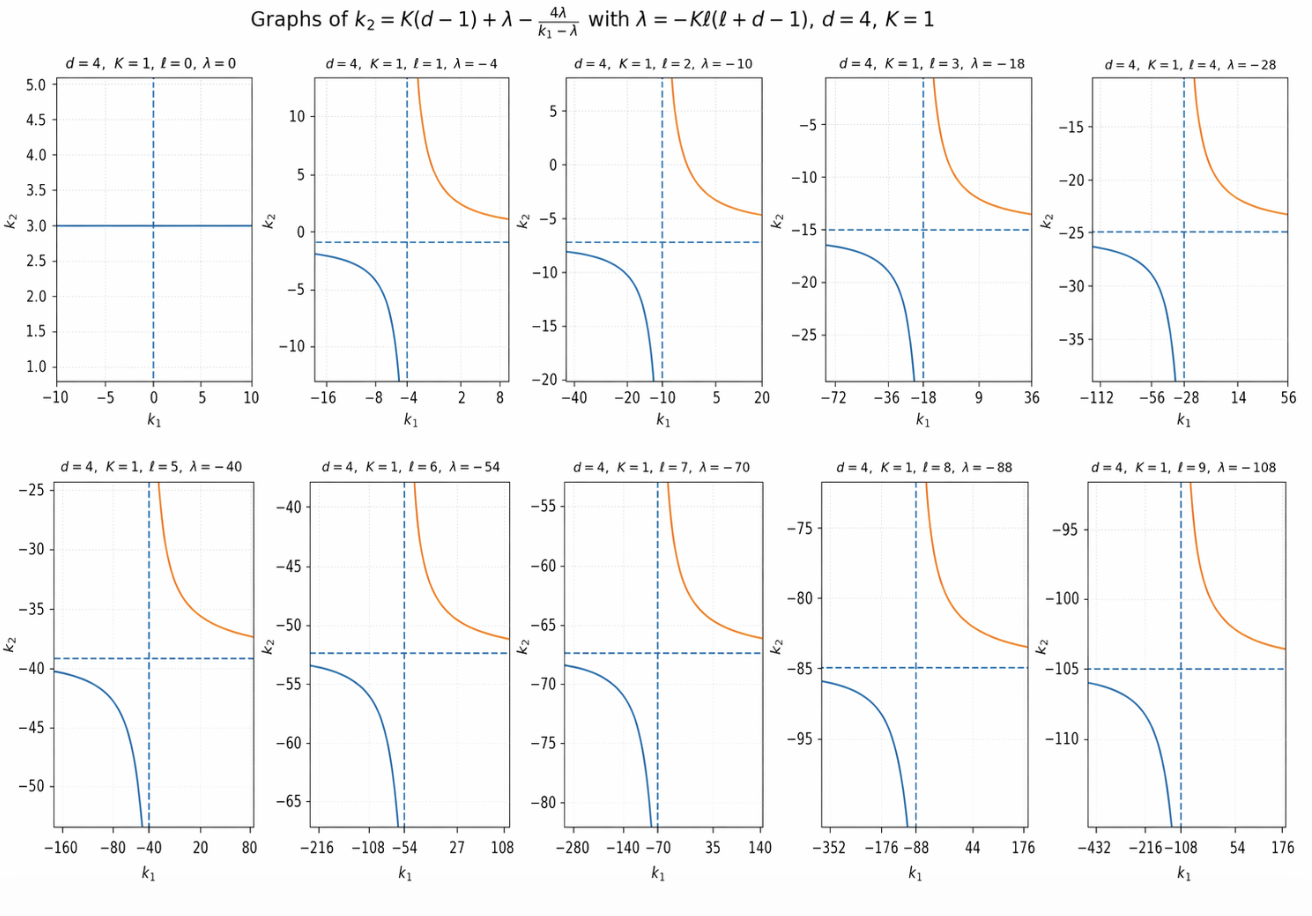}
    \label{fig18c}}
     {\includegraphics[width=.48\textwidth]{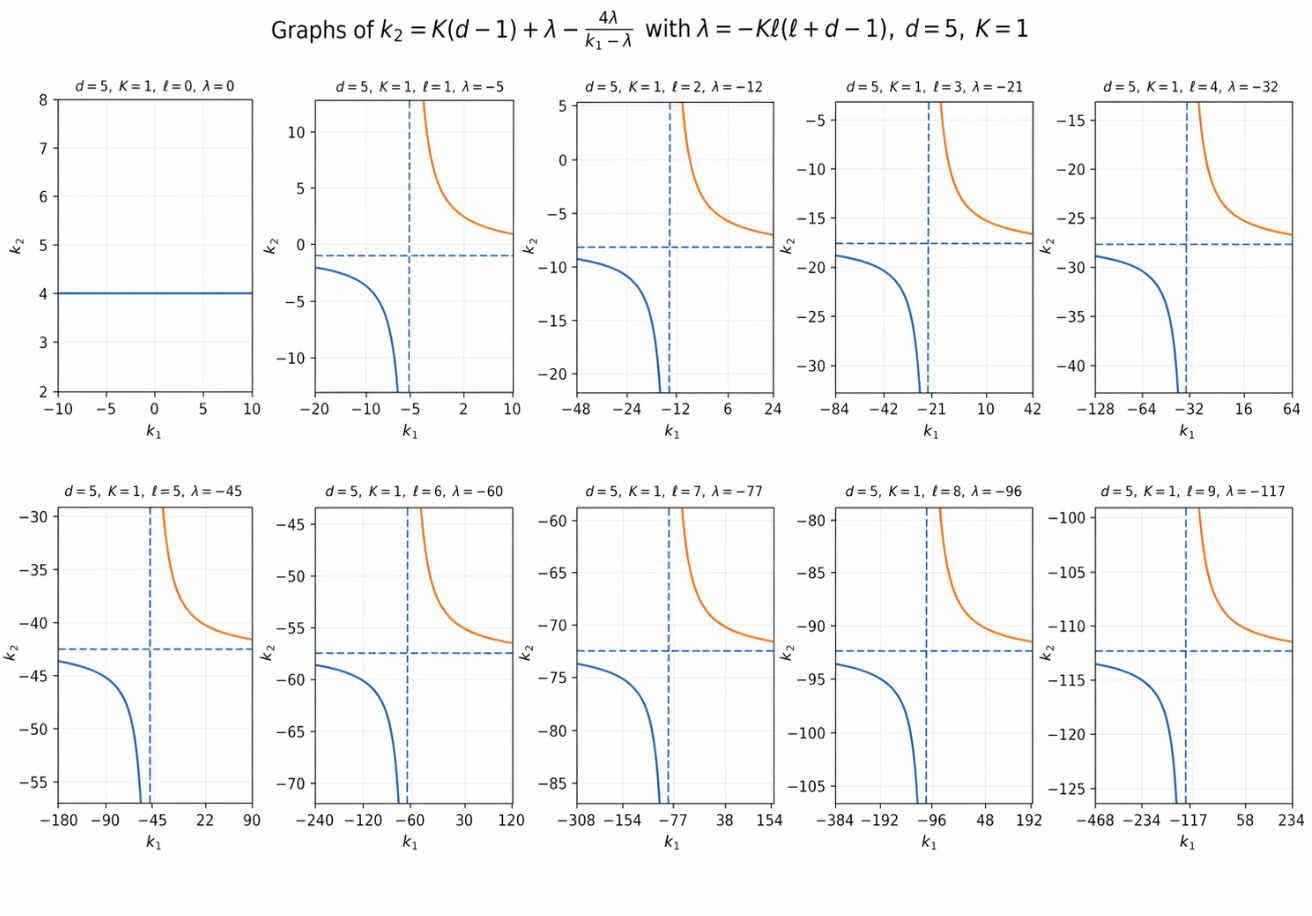}
    \label{fig18d}}
\caption{\textit{Graphs of curves that make the kernel of the $L$ operator non-trivial for $d=4,5$ and $K=1$.}}
\label{graph2}
\end{figure}
In Figure \ref{graph1} (above) and Figure \ref{graph2} (blow) are reported some graphs showing the curve of the plane $(k_1,k_2)$ that must be excluded for the kernel of the differential operator $L$ to be trivial. Recall the condition $k_1>0, k_2>0$.
\subsection{Trivialization theorems}
Let us now enounce the main theorem on the $d$-dimensional sphere.
\begin{Theorem}[Trivialization on $S^d$]\label{trivsys}
    Let $(S^d,\boldsymbol{g})$ be the $d$-dimensional sphere with sectional curvature $K>0$. Let $(k_1,k_2) \in (\mathbb{R}^+ \times \mathbb{R}^+)\setminus{\mathbb{J}}$. Then the harmonic-divgrad equations admit as only solution the trivial solution $f=0$ and $h=0$.
\end{Theorem}
\begin{proof}
By Proposition \ref{Prop1} we have that if $f$ and $h$ satisfy the harmonic-divgrad system then $L(f-\nabla^ih_i)=0$ and from Lemma \ref{Lem1} we have that if $(k_1,k_2) \in (\mathbb{R}^+ \times \mathbb{R}^+)\setminus{\mathbb{J}}$ then $\mathrm{Ker}(L)=0$, so we get
\begin{equation}
    f=\nabla^ih_i;
\end{equation}
using this information in equation \eqref{C1a} we can write
\begin{equation}
    [-k_1-2+\Delta]f=0 \quad \Rightarrow \quad f=0.
\end{equation}
since on $S^d$ the eigenvalues of $\Delta$ are non-positive. Once $f=0$ has been established\footnote{Note that this could be derived also from the fact that if $\mathrm{Ker}(L)=0$ and $Lf=0$ then $f=0$.}, equation \eqref{C1b} reduces to
\begin{equation}
\nabla^j \nabla_j h_i - k_2 h_i = 0 \, .
\label{eq:reduced-h}
\end{equation}
Since $h$ is a smooth $1$-form on the sphere $S^d$, we contract equation \eqref{eq:reduced-h} with $h^i$ and integrate over $S^d$ with respect to the Riemannian volume form $dV_{\boldsymbol{g}}$. This gives
\begin{equation}
\int_{S^d} h^i \nabla^j \nabla_j h_i \, dV_{\boldsymbol{g}}
-
k_2 \int_{S^d} h^i h_i \, dV_{\boldsymbol{g}}
=0 \, .
\label{eq:energy1}
\end{equation}
Using the metric, we can write $h^i h_i = |h|^2$, so that the second term is
\begin{equation}
\int_{S^d} h^i h_i \, dV_{\boldsymbol{g}}
=
\int_{S^d} |h|^2 \, dV_{\boldsymbol{g}} \, .
\label{eq:normh}
\end{equation}
It remains to compute the first term. Since $S^d$ is compact and without boundary, we can integrate by parts; consider the vector field
\begin{equation}
X^j := h^i \nabla^j h_i \, ,
\label{eq:vectorfield}
\end{equation}
its divergence is
\begin{equation}
\nabla_j X^j
=
\nabla_j \bigl( h^i \nabla^j h_i \bigr)
=
(\nabla_j h^i)(\nabla^j h_i) + h^i \nabla_j \nabla^j h_i \, .
\label{eq:divX}
\end{equation}
Integrating over $S^d$ and using the divergence theorem, the left-hand side vanishes:
\begin{equation}
\int_{S^d} \nabla_j X^j \, dV_{\boldsymbol{g}} = 0 \, .
\label{eq:divthm}
\end{equation}
Therefore,
\begin{equation}
\int_{S^d} h^i \nabla_j \nabla^j h_i \, dV_{\boldsymbol{g}}
=
-
\int_{S^d} (\nabla_j h^i)(\nabla^j h_i) \, dV_{\boldsymbol{g}} \, .
\label{eq:ibp1}
\end{equation}
Since the connection is metric-compatible, the quantity
\begin{equation}
(\nabla_j h^i)(\nabla^j h_i)
\end{equation}
is precisely the pointwise squared norm of $\nabla h$, namely
\begin{equation}
(\nabla_j h^i)(\nabla^j h_i) = |\nabla h|^2 \, .
\label{eq:normgrad}
\end{equation}
Hence
\begin{equation}
\int_{S^d} h^i \nabla^j \nabla_j h_i \, dV_{\boldsymbol{g}}
=
-
\int_{S^d} |\nabla h|^2 \, dV_{\boldsymbol{g}} \, .
\label{eq:ibp2}
\end{equation}

Substituting \eqref{eq:normh} and \eqref{eq:ibp2} into \eqref{eq:energy1}, we obtain
\begin{equation}
-
\int_{S^d} |\nabla h|^2 \, dV_{\boldsymbol{g}}
-
k_2 \int_{S^d} |h|^2 \, dV_{\boldsymbol{g}}
=0 \, ,
\end{equation}
or equivalently
\begin{equation}
\int_{S^d} |\nabla h|^2 \, dV_{\boldsymbol{g}}
+
k_2 \int_{S^d} |h|^2 \, dV_{\boldsymbol{g}}
=0 \, .
\label{eq:energyfinal}
\end{equation}
Both terms in \eqref{eq:energyfinal} are non-negative, and since $k_2>0$, the second term vanishes if and only if $h=0$. Therefore the identity \eqref{eq:energyfinal} implies
\begin{equation}
\int_{S^d} |\nabla h|^2 \, dV_g = 0 \, ,
\qquad
\int_{S^d} |h|^2 \, dV_g = 0 \, ,
\end{equation}
and hence
\begin{equation}
h=0 \, .
\label{eq:hzero}
\end{equation}
\end{proof}
At this point, using Killing-Hopf Theorem \ref{tkh} and the above Theorem \ref{trivsys} we can state the trivialization result on positive  sectional curvature space form.
\begin{Theorem}[Trivialization on a positive sectional curvature space form]\label{trivsysext}
    Let $(M,\boldsymbol{g})$ be a space form with sectional curvature $K>0$ of dimension $d$. Let $(k_1,k_2) \in (\mathbb{R}^+ \times \mathbb{R}^+)\setminus{\mathbb{J}}$. Then the harmonic-divgrad equations admit as only solution the trivial solution $f=0$ and $h=0$.
\end{Theorem}
\begin{proof}
    To prove the statement on a general positive sectional curvature space form, let
\[
\pi : S^d \to M
\]
be the Riemannian universal covering map. By the Killing--Hopf theorem, \(\pi\) is a local isometry. Let now \((f,h)\), with
\[
f \in C^\infty(M), \qquad h \in \Omega^1(M),
\]
be a solution of the harmonic-divgrad system on \(M\). We define on \(S^d\)
\[
\tilde f := f \circ \pi, \qquad \tilde h := \pi^\ast h.
\]
Since \(\pi\) is a local isometry, the Levi-Civita connection on \(S^d\) is the pullback of the Levi-Civita connection on \(M\) and therefore the natural differential operators appearing in the system commute with pullback. More precisely,
\begin{equation}
\widetilde{\Delta}(f \circ \pi)= (\Delta f)\circ \pi,
\qquad
d(f \circ \pi)=\pi^\ast(df),
\end{equation}
and, for every \(1\)-form \(h\),
\begin{equation}
\widetilde{\delta}(\pi^\ast h)= (\delta h)\circ \pi,
\qquad
\widetilde{\Delta}(\pi^\ast h)=\pi^\ast(\Delta h),
\end{equation}
where the operators with tildes are computed with respect to the metric on \(S^d\).\\
Therefore, pulling back the harmonic-divgrad system on \(M\),
\begin{subequations}
\begin{align}
(-k_1+\Delta)f-2\,\delta h&=0,\\
(-k_2+\Delta)h+2\,df&=0,
\end{align}
\end{subequations}
we obtain on \(S^d\)
\begin{subequations}
\begin{align}
(-k_1+\widetilde{\Delta})\tilde f-2\,\widetilde{\delta}\tilde h&=0,\\
(-k_2+\widetilde{\Delta})\tilde h+2\,d\tilde f&=0.
\end{align}
\end{subequations}
Hence \((\tilde f,\tilde h)\) is a solution of the same harmonic-divgrad system on the sphere \(S^d\).\\
Since \((k_1,k_2)\in (\mathbb{R}^+\times\mathbb{R}^+)\setminus\mathbb{J}\), Theorem \ref{trivsys} applies on \(S^d\) and yields
\begin{equation}
\tilde f=0,
\qquad
\tilde h=0
\qquad \quad
\text{on } S^d.
\end{equation}
That is,
\begin{equation}
f\circ\pi=0,
\qquad
\pi^\ast h=0.
\end{equation}
Since \(\pi\) is surjective, from \(f\circ\pi=0\) it follows immediately that \(f=0\) on \(M\). Indeed, for every \(x\in M\), choosing \(\tilde x\in S^d\) such that \(\pi(\tilde x)=x\), one has
\begin{equation}
f(x)=f(\pi(\tilde x))=(f\circ\pi)(\tilde x)=0.
\end{equation}

Similarly, \(\pi^\ast h=0\) implies \(h=0\). In fact, since \(\pi\) is a local diffeomorphism, for every \(\tilde x\in S^d\) the differential
\[
d\pi_{\tilde x}:T_{\tilde x}S^d\to T_{\pi(\tilde x)}M
\]
is an isomorphism. If \(h_{\pi(\tilde x)}\neq 0\), then its pullback through the isomorphism \(d\pi_{\tilde x}\) could not vanish, contradicting \((\pi^\ast h)_{\tilde x}=0\). Therefore \(h=0\) on \(M\).
\end{proof}
\section{Conclusions and outlook}
In this work we analyzed the harmonic-divgrad system on Riemannian manifolds of constant sectional curvature, Definition \ref{def}, and proved, in Theorem \ref{trivsysext}, that away from the set $\mathbb{J}$, the system admits only the trivial solution on positive sectional curvature space forms. The result is useful for two related reasons. First, it provides a sharp obstruction to the existence of non-trivial smooth modes of the coupled scalar–one-form system, thereby could considerably simplify the analysis of gauge parameters and asymptotic data in higher-form and mixed-symmetry gauge theories. Second, it shows that the trivialization mechanism is not tied to the round sphere alone, but extends to all positive-curvature space forms through the Killing-Hopf theorem and therefore to non-simply connected spherical quotients as well. This widens the geometric scope of the result in a natural and physically meaningful way.

From the point of view of asymptotic symmetries, this suggests that whenever the geometry at null-infinity is modeled not by a sphere but by a positive sectional curvature space form, the same rigidity phenomenon persists. On the one hand, this means that any non-vanishing asymptotic charge or residual gauge parameter must originate from special loci in parameter space or from weaker boundary conditions,  polyhomogeneous sectors, additional topological or singular structures not captured by the analyses which lead to the emergence of the harmonic-divgrad system in gauge theories. This makes the present Theorem \ref{trivsysext} a useful diagnostic tool in the classification of admissible asymptotic sectors. On the other hand, Theorem \ref{trivsysext} can be used to understand when a gauge fixing or a gauge-for-gauge fixing is possible, like in the case of $p$-form gauge theory.

The extension to lens spaces is especially interesting. Since lens spaces are spherical space forms, the trivialization theorem applies directly to them. This opens the possibility of studying higher-form and mixed-symmetry gauge theories on backgrounds where the angular geometry is a lens-space quotient rather than a round sphere, a setting that naturally appears in string-theoretic and holographic constructions. In those contexts, replacing the sphere by a lens space changes the global topology while preserving positive constant sectional curvature locally, so the present result isolates precisely what remains rigid and what may instead depend on the global quotient structure. This may be relevant, for instance, in the analysis of gauge sectors, asymptotic charges and boundary data in quiver gauge theories emerging from holographic models built on orbifold or quotient geometries.

Several directions deserve further investigation. A first one is to understand if we can drop the hypothesis of $k_1>0,k_2>0$ and modify appropriately the set $\mathbb{J}$. A second is to generalize the present analysis to systems involving higher-degree forms or mixed-symmetry tensors, where analogous coupled elliptic operators arise in gauge-for-gauge chains and in the study of asymptotic symmetries. A third is to explore whether, on non-simply connected positive-curvature space forms such as lens spaces, the interplay between local trivialization and global topology can produce physically meaningful sectors once one relaxes smoothness assumptions or includes distributional, twisted or cohomologically non-trivial data. Finally the extension to space form with vanishing or negative sectional curvature. In these cases the arguments
developed above cannot be applied directly on the universal covers
$\mathbb{E}^d$ or $\mathbb{H}^d$, since these spaces are non-compact.
One should instead work directly on the compact  and without boundary quotient of interest $M$. Typical examples include flat tori and Klein bottles, which arise in toroidal
compactifications, finite-volume quantum field theory and non-orientable
two-dimensional models, as well as closed hyperbolic surfaces and compact
hyperbolic three-manifolds, which appear respectively as higher-genus string
worldsheets and as spatial sections in cosmological models with negative
curvature and non-trivial topology. 
The reduction to the fourth-order scalar operator is expected to remain valid,
as it only relies on the identity between the Riemann tensor and the constant sectional curvature. Moreover, compactness allows one to use the energetic argument directly on the quotient. However, the corresponding
resonant set should then be defined by
\[
\mathbb{J}_M:=\left\{(k_1,k_2)\in\mathbb{R}_+\times \mathbb{R}_+\ \middle|\ 
p(\lambda)=0
\text{ for some }\lambda\in\operatorname{Spec}(\Delta_M)\right\}.
\]
Thus, outside $\mathbb{J}_M$, one expects the same trivialization result, although
$\mathbb{J}_M$ now depends on the spectrum of the specific compact quotient.

These problems would help clarify the role of geometry and topology in asymptotic gauge dynamics beyond the standard spherical setting.

\section*{Conflict of Interest, Funding and Data Availability}
The author declares no conflict of interest. This research received no external funding. No new data were created or analyzed in this study.

\bibliographystyle{JHEP} 
\bibliography{biblio}

@phdthesis{Manzoni:2025gxw,
    author = "Manzoni, Federico",
    title = "{Duality and asymptotic symmetries in gravisolitons and gauge theories}",
    school = "Universit{\`a} degli Studi di Roma Tre",
    year = "2025"
}

@article{Romoli:2025map,
    author = "Romoli, Matteo and Manzoni, Federico",
    title = "{Higher-order p-form Asymptotic Symmetries in $D=p+2$}",
    eprint = "2503.22572",
    archivePrefix = "arXiv",
    primaryClass = "hep-th",
    doi = "10.1007/s10773-026-06276-7",
    journal = "Int. J. Theor. Phys.",
    volume = "65",
    number = "3",
    pages = "56",
    year = "2026"
}

@article{Manzoni:2024tow,
    author = "Manzoni, Federico",
    title = "{Duality, asymptotic charges and higher form symmetries in p-form gauge theories}",
    eprint = "2411.05602",
    archivePrefix = "arXiv",
    primaryClass = "math-ph",
    doi = "10.1140/epjc/s10052-026-15359-y",
    journal = "Eur. Phys. J. C",
    volume = "86",
    number = "2",
    pages = "155",
    year = "2026"
}

@article{Manzoni:2025zmi,
    author = "Manzoni, Federico",
    title = "{Duality, asymptotic charges and algebraic topology in mixed symmetry tensor gauge theories and applications}",
    eprint = "2501.05104",
    archivePrefix = "arXiv",
    primaryClass = "math-ph",
    month = "1",
    year = "2025"
}

@article{Manzoni:2026skv,
    author = "Manzoni, Federico",
    title = "{The asymptotic charges of Curtright dual graviton and Curtright extensions of $\mathfrak{BMS}$ algebra}",
    eprint = "2602.20037",
    archivePrefix = "arXiv",
    primaryClass = "hep-th",
    doi = "10.1088/1751-8121/ae65e9",
    journal = "J. Phys. A",
    volume = "59",
    number = "19",
    pages = "195401",
    year = "2026"
}

@article{Ionicioiu:1998qg,
    author = "Ionicioiu, Radu and Williams, Ruth M.",
    title = "{Lens spaces and handlebodies in three-dimensional quantum gravity}",
    eprint = "gr-qc/9806027",
    archivePrefix = "arXiv",
    doi = "10.1088/0264-9381/15/11/008",
    journal = "Class. Quant. Grav.",
    volume = "15",
    pages = "3465--3477",
    year = "1998",
    note = "Fondamentale per l'uso dei lens spaces nella gravità quantistica euclidea."
}

@article{Aharony:2003sx,
    author = "Aharony, Ofer and Marsano, Joseph and Minwalla, Shiraz and Papadodimas, Kyriakos and Van Raamsdonk, Mark",
    title = "{The Hagedorn - Hawking-Page phase transition in weakly coupled large N Yang-Mills theories}",
    eprint = "hep-th/0310285",
    archivePrefix = "arXiv",
    journal = "Adv. Theor. Math. Phys.",
    volume = "8",
    pages = "603--696",
    year = "2004",
    note = "Analizza teorie di gauge su $S^3/\mathbb{Z}_p$, ovvero Lens Spaces."
}

@article{Amariti:2022dui,
    author = "Amariti, Antonio and Bianchi, Massimo and Fazzi, Marco and Mancani, Salvo and Riccioni, Fabio and Rota, Simone",
    title = "{Multi-planarizable quivers, orientifolds, and conformal dualities}",
    eprint = "2212.03913",
    archivePrefix = "arXiv",
    primaryClass = "hep-th",
    doi = "10.1007/JHEP09(2023)094",
    journal = "JHEP",
    volume = "09",
    pages = "094",
    year = "2023"
}

@article{Antinucci:2021edv,
    author = "Antinucci, Andrea and Bianchi, Massimo and Mancani, Salvo and Riccioni, Fabio",
    title = "{Suspended fixed points}",
    eprint = "2105.06195",
    archivePrefix = "arXiv",
    primaryClass = "hep-th",
    doi = "10.1016/j.nuclphysb.2022.115695",
    journal = "Nucl. Phys. B",
    volume = "976",
    pages = "115695",
    year = "2022"
}

@article{Ogawa:2022fyc,
    author = "Ogawa, Norihiro and Sato, Yoshiki and Watanabe, Kenta and Zhou, Yang",
    title = "{Wedge holography in flat space and celestial holography}",
    eprint = "2207.06735",
    archivePrefix = "arXiv",
    journal = "Phys. Rev. D",
    volume = "106",
    number = "12",
    pages = "126001",
    year = "2022",
    note = "Riferimento per la 'Wedge Holography' (olografia in trincee)."
}

@article{Manzoni:2021dij,
    author = "Manzoni, Federico",
    title = "{Solitonic solutions and gravitational solitons: an overview}",
    eprint = "2102.11259",
    archivePrefix = "arXiv",
    primaryClass = "nlin.SI",
    month = "2",
    year = "2021"
}

@article{Ciambelli:2022vot,
    author = "Ciambelli, Luca",
    title = "{From Asymptotic Symmetries to the Corner Proposal}",
    eprint = "2212.13644",
    archivePrefix = "arXiv",
    primaryClass = "hep-th",
    doi = "10.22323/1.435.0002",
    journal = "PoS",
    volume = "Modave2022",
    pages = "002",
    year = "2023"
}

@article{Afshar:2018apx,
    author = "Afshar, Hamid and Esmaeili, Erfan and Sheikh-Jabbari, M. M.",
    title = "{Asymptotic Symmetries in $p$-Form Theories}",
    eprint = "1801.07752",
    archivePrefix = "arXiv",
    primaryClass = "hep-th",
    reportNumber = "IPM-P-2018-002",
    doi = "10.1007/JHEP05(2018)042",
    journal = "JHEP",
    volume = "05",
    pages = "042",
    year = "2018"
}

@article{Curtright:1980yk,
    author = "Curtright, Thomas",
    title = "{GENERALIZED GAUGE FIELDS}",
    reportNumber = "EFI-80-04",
    doi = "10.1016/0370-2693(85)91235-3",
    journal = "Phys. Lett. B",
    volume = "165",
    pages = "304--308",
    year = "1985"
}

@article{Curtright:1980un,
    author = "Curtright, Thomas L.",
    editor = "Durand, Loyal and Pondrom, Lee G.",
    title = "{HIGH SPIN FIELDS}",
    reportNumber = "UFTP-80-16",
    doi = "10.1063/1.2948665",
    journal = "AIP Conf. Proc.",
    volume = "68",
    pages = "985--988",
    year = "1980"
}

@article{Curtright:1980yj,
    author = "Curtright, Thomas L. and Freund, Peter G. O.",
    title = "{MASSIVE DUAL FIELDS}",
    reportNumber = "EFI 80/05-CHICAGO",
    doi = "10.1016/0550-3213(80)90174-1",
    journal = "Nucl. Phys. B",
    volume = "172",
    pages = "413--424",
    year = "1980"
}

@misc{Anderson1992IntroductionTT,
  title={Introduction to the Variational Bicomplex},
  author={Ian Anderson},
  year={1992}
}

@article{Romoli:2024hlc,
    author = "Romoli, Matteo",
    title = "{$ \mathcal{O} $(r$^{N}$) two-form asymptotic symmetries and renormalized charges}",
    eprint = "2409.08131",
    archivePrefix = "arXiv",
    primaryClass = "hep-th",
    doi = "10.1007/JHEP12(2024)085",
    journal = "JHEP",
    volume = "12",
    pages = "085",
    year = "2024"
}

@article{Bondi:1962px,
    author = "Bondi, H. and van der Burg, M. G. J. and Metzner, A. W. K.",
    title = "{Gravitational waves in general relativity. 7. Waves from axisymmetric isolated systems}",
    doi = "10.1098/rspa.1962.0161",
    journal = "Proc. Roy. Soc. Lond. A",
    volume = "269",
    pages = "21--52",
    year = "1962"
}

@article{Sachs:1962wk,
    author = "Sachs, R. K.",
    title = "{Gravitational waves in general relativity. 8. Waves in asymptotically flat space-times}",
    doi = "10.1098/rspa.1962.0206",
    journal = "Proc. Roy. Soc. Lond. A",
    volume = "270",
    pages = "103--126",
    year = "1962"
}

@article{Barnich:2009se,
    author = "Barnich, Glenn and Troessaert, Cedric",
    title = "{Symmetries of asymptotically flat 4 dimensional spacetimes at null infinity revisited}",
    eprint = "0909.2617",
    archivePrefix = "arXiv",
    primaryClass = "gr-qc",
    reportNumber = "ULB-TH-09-24",
    doi = "10.1103/PhysRevLett.105.111103",
    journal = "Phys. Rev. Lett.",
    volume = "105",
    pages = "111103",
    year = "2010"
}

@article{Flanagan_2020,
	doi = {10.1007/jhep01(2020)002},
	url = {https://doi.org/10.1007%2Fjhep01%282020%29002},
	year = 2020,
	month = {jan},
	publisher = {Springer Science and Business Media {LLC}
},
	volume = {2020},
	number = {1},
	author = {{\'{E}}anna {\'{E}}. Flanagan and Kartik Prabhu and Ibrahim Shehzad},
	title = {Extensions of the asymptotic symmetry algebra of general relativity},
	journal = {Journal of High Energy Physics}
}

@article{Ferreira:2016hee,
    author = "Ferreira, Ricardo Z. and Sandora, McCullen and Sloth, Martin S.",
    title = "{Asymptotic Symmetries in de Sitter and Inflationary Spacetimes}",
    eprint = "1609.06318",
    archivePrefix = "arXiv",
    primaryClass = "hep-th",
    doi = "10.1088/1475-7516/2017/04/033",
    journal = "JCAP",
    volume = "04",
    pages = "033",
    year = "2017"
}

@article{Henneaux:1986ht,
    author = "Henneaux, M. and Teitelboim, C.",
    title = "{P FORM ELECTRODYNAMICS}",
    doi = "10.1007/BF01889624",
    journal = "Found. Phys.",
    volume = "16",
    pages = "593--617",
    year = "1986"
}

@misc{Compere:2018aar,
    author = "Comp\`ere, Geoffrey and Fiorucci, Adrien",
    title = "{Advanced Lectures on General Relativity}",
    eprint = "1801.07064",
    archivePrefix = "arXiv",
    primaryClass = "hep-th",
    month = "1",
    year = "2018"
}

@article{strominger2018lectures,
      title={Lectures on the Infrared Structure of Gravity and Gauge Theory}, 
      author={Andrew Strominger},
      year={2018},
      eprint={1703.05448},
      archivePrefix={arXiv},
      primaryClass={hep-th}
}

@book{lee2019introduction,
  title={Introduction to Riemannian Manifolds},
  author={Lee, J.M.},
  isbn={9783319917542},
  series={Graduate Texts in Mathematics},
  url={https://books.google.it/books?id=UIPltQEACAAJ},
  year={2019},
  publisher={Springer International Publishing}
}

@article{Barnich_2010,
	doi = {10.1103/physrevlett.105.111103},
  
	url = {https://doi.org/10.1103%2Fphysrevlett.105.111103},
  
	year = 2010,
	month = {sep},
  
	publisher = {American Physical Society ({APS})},
  
	volume = {105},
  
	number = {11},
  
	author = {Glenn Barnich and C{\'{e}
}dric Troessaert},
  
	title = {Symmetries of Asymptotically Flat Four-Dimensional Spacetimes at Null Infinity Revisited},
  
	journal = {Physical Review Letters}
}

@article{Manzoni:2022htx,
    author = "Manzoni, Federico",
    title = "{Algebro-geometrical orientifold and IR dualities}",
    eprint = "2211.10113",
    archivePrefix = "arXiv",
    primaryClass = "hep-th",
    doi = "10.1088/1572-9494/ad0455",
    journal = "Commun. Theor. Phys.",
    volume = "75",
    number = "12",
    pages = "125005",
    year = "2023"
}

@article{Maldacena_1999, volume={38},
   ISSN={0020-7748},
   url={http://dx.doi.org/10.1023/A:1026654312961},
   DOI={10.1023/a:1026654312961},
   number={4},
   journal={International Journal of Theoretical Physics},
   publisher={Springer Science and Business Media LLC},
   author={Maldacena, Juan},
   title = "The Large-N Limit of Superconformal Field Theories and Supergravity",
   year={1999}
}

@article{Manzoni:2024agc,
    author = "Manzoni, Federico",
    title = "{Axialgravisolitons at infinite corner}",
    eprint = "2404.04951",
    archivePrefix = "arXiv",
    primaryClass = "gr-qc",
    doi = "10.1088/1361-6382/ad61b5",
    journal = "Class. Quant. Grav.",
    volume = "41",
    number = "17",
    pages = "177001",
    year = "2024"
}

@article{manzf,
    author = "Francia, Dario and Manzoni, Federico",
    title = "{Asymptotic charges of $p-$forms and their dualities in any $D$}",
    eprint = "2411.04926",
    archivePrefix = "arXiv",
    primaryClass = "hep-th",
    month = "11",
    year = "2024"
}

@article{Hopf1926ZumCR,
  title={Zum Clifford-Kleinschen Raumproblem},
  author={Heinz Hopf},
  journal={Mathematische Annalen},
  year={1926},
  volume={95},
  pages={313-339},
  url={https://api.semanticscholar.org/CorpusID:121312968}
}

@article{Killing1891,
author = {Killing, W.},
journal = {Mathematische Annalen},
pages = {257-278},
title = {Ueber die Clifford-Klein'schen Raumformen.},
url = {http://eudml.org/doc/157572},
volume = {39},
year = {1891},
}

@book{Calcagni:2017sdq,
    author = "Calcagni, Gianluca",
    title = "{Classical and Quantum Cosmology}",
    doi = "10.1007/978-3-319-41127-9",
    isbn = "978-3-319-41125-5, 978-3-319-82273-0, 978-3-319-41127-9",
    publisher = "Springer",
    series = "Graduate Texts in Physics",
    year = "2017"
}

\end{document}